\documentclass[%
 reprint,
 amsmath,amssymb,
 aps, physrev,
]{revtex4-2}

\usepackage{graphicx,amsthm}
\graphicspath{{./}{figures/}{cluster_shadow_benchmark/figures/}}
\usepackage{dcolumn}
\usepackage{bm}
\usepackage{comment}
\usepackage{xcolor}
\usepackage[colorlinks=true]{hyperref}
\hypersetup{
 bookmarksnumbered,
  pdfstartview={FitH},
  citecolor=blue,
  linkcolor=red,
  urlcolor=black}

\newtheorem{theorem}{Theorem}

\begin{document}


\title{A Few Constrain Many: Correlation-Enhanced Learning of Many-Body Quantum Systems
}%

\author{Matteo Tedde}
 \email{matteo.tedde@studenti.polito.it}
\author{Davide Girolami}%
 \email{davide.girolami@polito.it}
\affiliation{%
 \textit{Politecnico di Torino, Italy}
}%



\date{\today}

\begin{abstract}Discovering properties of many-body quantum systems is challenging because of the large number of parameters to determine. Here, we show that  correlations among quantum observables help reduce the complexity of quantum learning. A polynomial number of selected measurements can bound the values of exponentially many dependent yet unmeasured quantities, enabling efficient estimation of general functions of quantum states. We leverage this result to design correlation-informed learning algorithms that estimate key quantum resources, such as entanglement and magic, in systems of 100 qubits. They achieve a lower relative error than shadow tomography and neural-state methods by using information about a polynomial number of  measured Pauli strings. These protocols are readily testable with today's quantum computers, as they do not require premeasurement entangling gates.  Quantum constraints themselves are therefore  a resource for exploring large quantum systems.

\end{abstract}

\maketitle


\section*{\label{sec:introduction}Introduction}
Many-body quantum systems are the building blocks of matter and the hardware of promising quantum technologies~\cite{Preskill2018}. Yet, learning their features is both theoretically and experimentally hard,  because it requires determining a  number of parameters that grows exponentially with their components. Further, quantum measurements alter the state of the observed system. Hence, estimating even a single quantity demands preparing  several  replicas of a quantum state.

Here, we show that correlations among quantum observables are a computational resource for quantum learning (technical details are reported in the companion Appendix). Full information about quantum states of any dimension can be encoded in correlated variables. As a result, knowing  the expectation value of a few of them provides information about the rest, i.e., an exponential number of  unmeasured quantities.\\
We quantitatively characterize the strength and range of these correlations by a set of inequalities, which bound the expectation values of unmeasured observables in terms of the ones of measured quantities (Fig.~\ref{fig1}). Specifically, we calculate convergence bounds to information propagation in terms of their deviation from extremal values. When  even a single measured quantity is  maximized, these bounds become ``certainty relations", as they provide knowledge about the whole state.\\
This result is universal, representing zero-cost initial information for  exploration of any quantum system. Hence, any generic function of quantum states can be determined with a quantifiable degree of precision. A formal classical analogy of this quantum effect is represented by negative temperature systems, in which moving towards extremal values of energy decreases entropy.

\begin{figure}[t!]
\includegraphics[width=.49\textwidth,height=6cm]{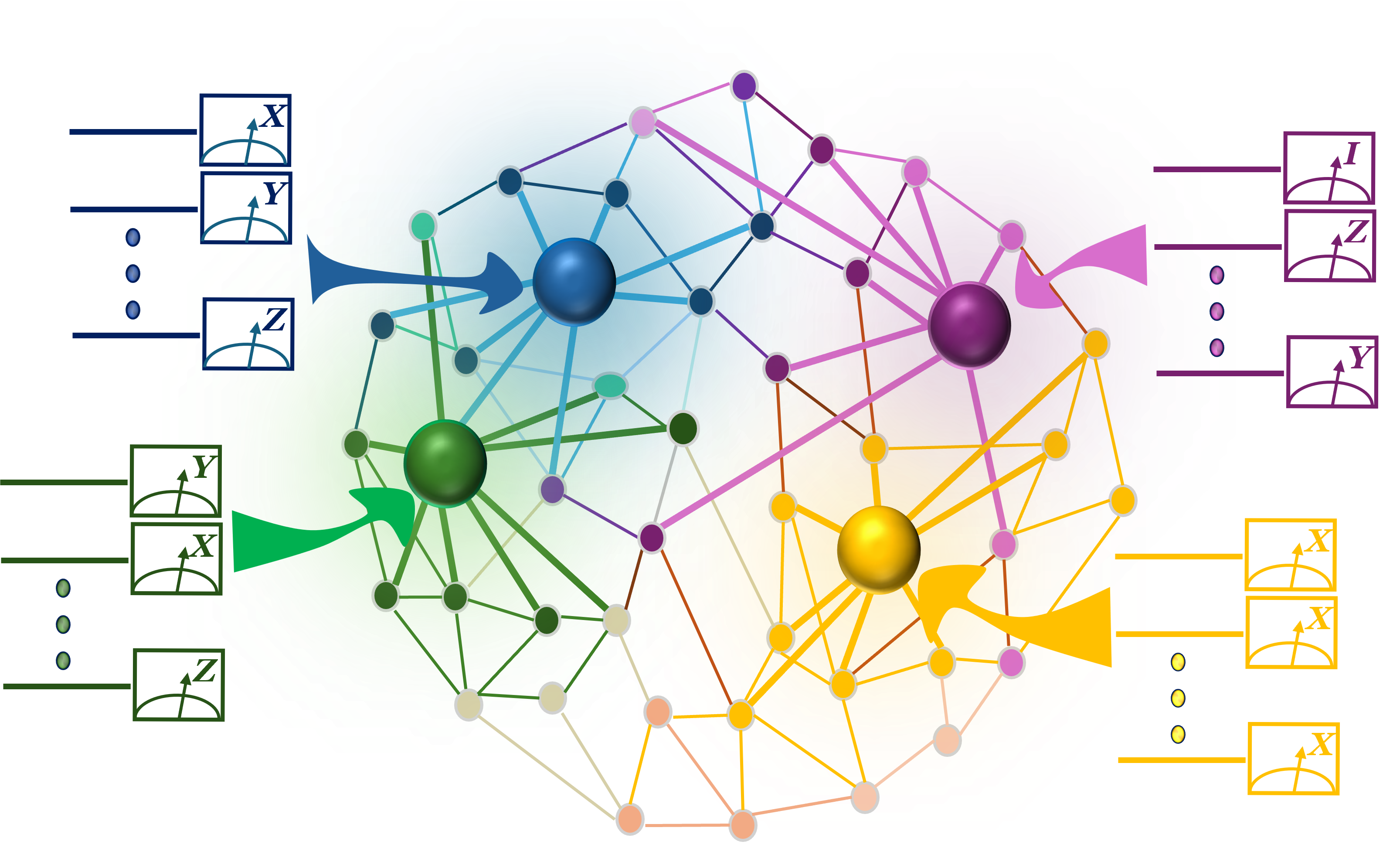}
\caption{
Measuring a small number of Pauli observables (here depicted as the biggest vertices in a connected network)--in fact, even a single Pauli coefficient--carries information about an exponential number of unobserved Pauli strings (the rest of the network). The strength of such correlations is quantified by a set of inequalities. Extremal expectation values ($\pm1$) are more informative, as they yield tighter bounds to unmeasured quantities. An experimentally  convenient choice is to directly measure a set of commuting Pauli observables, as they share a common eigenbasis, i.e., a single preparation circuit. Additional prior information about the scrutinized system, e.g., spatial symmetries or justified ansaetzes, can also inform the selection of the measured observables.}
\label{fig1}
\end{figure}
We then perform ``correlation-informed" (CorInf) estimation of  two important quantum traits: entanglement and non-stabilizerness (known as magic)~\cite{BravyiKitaev2005,Veitch2012,Howard2014,SeddonCampbell2019}. As they are nonlinear properties of quantum systems, they are not directly measurable. Also, full state tomography is impractical for registers larger than a few qubits. Many computational methods have been proposed to overcome these limitations.\\ 
We here harness correlation among Pauli observables to estimate  the local purity and the stabilizer Renyi-2 entropy of complex states of 100 qubits, without informing the algorithm about their structure~\cite{DurVidalCirac2000,RaussendorfBriegel2001}. The central computational idea is that a polynomial number of informative Pauli measurements can constrain an exponentially larger family of unmeasured expectation values, allowing the measurement budget to be concentrated where it is most relevant for the target property.\\
The adaptive power of the CorInf estimation enables us to achieve a relative error at fixed measurement shots up to a thousand times lower than the ones obtained via   shadow tomography \cite{Aaronson2018,Huang2020,FlammiaLiu2011,DaSilva2011,Zhang2021,Struchalin2021,Chen2021,KohGrewal2022}, based on unstructured exploration of the Hilbert space. We note that there exists a different kind of strategies, such as    neural state approximations built from the Autoregressive Gram-Hadamard Density Operator ansatz (AGHDO), but they require assumptions on the input wave function~\cite{CarleoTroyer2017,Torlai2018,TorlaiMelko2018,Carrasquilla2019,VicentiniRossiCarleo2022}.\\
Most important, we show that these different estimation strategies are not mutually exclusive. We combine them in designing the ``correlation-informed AGHDO" (CorInf-AGHDO) estimation protocol. This algorithm yields a further sixfold reduction of relative error in the same case studies, by exploiting both information propagation from measured Pauli observables to unmeasured coefficients and the Hilbert space contraction due to the neural state approximation.

CorInf algorithms meet several properties that make them readily implementable in current quantum hardware, i.e., Noisy-Intermediate Scale Quantum devices (NISQ),
and favor scalability of learning strategies for large quantum devices. First, they require only preparation of uncorrelated copies of the measured systems, conversely to multiple-copy SWAP tests~\cite{Ekert2002,Islam2015}. Second, they do not rely on artificial ansaetzes on the system structure. Third, they employ Pauli measurements directly applied to the input state, with no need of additional Clifford or non-Clifford rotations. Fourth, as demonstrated in the case study, they can feed prior information to upgrade canonical  estimation protocols.

\section*{\label{sec:the-results}Information propagation among quantum observables}

Consider a finite dimensional quantum system of $n$ qubits. Full knowledge of its state $\rho$ requires $4^n-1$ parameters. This exponentially scaling number of degrees of freedom can be tamed  by characterizing them with a  set of statistically dependent parameters. A convenient choice is represented by a density matrix description  in terms of Pauli observables $P\in \mathcal P_n=\{I,X,Y,Z\}^{\otimes n}$. Calling their expectation value $\mu_P=\operatorname{Tr}(\rho P)$, the Bloch form of the state is then $\rho\propto I+\sum_{P\in \mathcal P_n \backslash I}\mu_P P$.  
Information sharing in quantum observables is  best understood from the decomposition of correlation functions into commutator and anticommutator: \begin{equation}
\langle P_iP_j\rangle=\langle[P_i,P_j]\rangle/2+\langle\{P_i,P_j\}\rangle/2.
\end{equation}
Specifically, each pair of Pauli observables in $\mathcal P_n$ either commutes or anticommutes.    Uncertainty relations constrain anticommuting pairs~\cite{Robertson1929,MacconePati2014,Guhne2004}, while stabilizer relations propagate information among commuting products~\cite{Gottesman1997,TothGuhne2005}. We rigorously quantify these information bounds, and use them to guide the measurement distribution itself. The key idea is to relax the customary assumption that these parameters must be known {\it exactly} to extract useful information. Further, we exploit the counterintuitive feature of quantum observables that their amount of correlations depends on their values.\\
The single qubit case is sufficient to illustrate proliferation of information in the space of Pauli observables.  Given $\rho=1/2 (I_2+\mu_X X +\mu_Y Y +\mu_Z Z)$, full state reconstruction requires three Pauli measurements. Yet, if a single measured observable has nearly extremal value, say $\mu_X\approx 1$, then, because of the quantum bound $\mu_X^2+\mu_Y^2+\mu_Z^2\in [0,1]$,  one knows with good approximation that  $\mu_{Y,Z}\approx 0$, with no need
of further measurements. The result extends to states of Pauli observable pairs acting on different subsystems of different size. That is, such form of statistical dependence propagates throughout the system.

Our main theoretical result is a general statement about the information content of correlation-informed bounds:
\begin{theorem}[Pauli information propagation]
\label{thm:pauli-propagation}
A linear number of nearly deterministic, independent commuting Pauli expectations bounds exponentially many unmeasured expectations.\\
Specifically, 
    let $G_1,\dots,G_r$, with $r\le n$, be independent commuting Pauli operators. Suppose there are known signs $s_i$ such that $\mu_{H_i}\ge 1-\epsilon_i$, where $H_i=s_iG_i$ and $0\le \epsilon_i\le \epsilon\le 1$. For $\mathcal K\subseteq\{1,\dots,r\}$, define $H_{\mathcal K}=\prod_{i\in\mathcal K}H_i$ and $\mathcal S=\langle H_1,\dots,H_r\rangle$. Then:
\begin{enumerate}
        \item Each of the $2^r$ elements in $\mathcal S$ satisfies $|1-\mu_{H_{\mathcal K}}|\le \sum_{i\in \mathcal K} \epsilon_i \le r\epsilon$.
        \item There are $4^n-2^{2n-r}$ Pauli operators anticommuting with at least one $G_i$, each satisfying $|\mu_Q|\le \min_{i:\{Q,G_i\}=0}\sqrt{2\epsilon_i-\epsilon_i^2}$.
        \item There are $2^{2n-r}-2^r$ operators that commute with every $G_i$ but lie outside $\mathcal S$.
    \end{enumerate}
\end{theorem}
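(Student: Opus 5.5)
The three items are of different kinds, and I would prove them separately. Item 1 is a union-bound-type inequality on commuting projectors. Item 2 is an uncertainty-type bound for anticommuting pairs. Item 3 is a pure counting statement in the symplectic picture of the Pauli group.

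\emph{Item 1.} Write $\Pi_i=(I+H_i)/2$. Since the $H_i$ commute, the $\Pi_i$ are commuting projectors, and $\mathrm{Tr}(\rho\,\Pi_i)=(1+\mu_{H_i})/2\ge 1-\epsilon_i/2$.

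For $\mathcal K\subseteq\{1,\dots,r\}$, work in the common eigenbasis of the $H_i$. There the eigenvalue of $H_{\mathcal K}$ is the product of the eigenvalues $\pm1$ of the $H_i$ with $i\in\mathcal K$. Hence $H_{\mathcal K}=-1$ only if some $H_i=-1$, which gives the operator inequality
\[
\frac{I-H_{\mathcal K}}{2}\le\sum_{i\in\mathcal K}\frac{I-H_i}{2}.
\]
Taking the expectation in $\rho$ yields $1-\mu_{H_{\mathcal K}}\le\sum_{i\in\mathcal K}\epsilon_i$. Since $\mu_{H_{\mathcal K}}\le 1$, the left side is nonnegative, so this is exactly the claimed bound on $|1-\mu_{H_{\mathcal K}}|$, and it is at most $|\mathcal K|\epsilon\le r\epsilon$.

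For the count, independence of the $G_i$ means that distinct subsets give distinct operators. So $\mathcal S$ has exactly $2^r$ elements, and none of them is $-I$.

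\emph{Item 2.} Suppose $Q$ anticommutes with $G_i$, hence with $H_i$. The plan is to use the standard inequality $\mu_A^2+\mu_B^2\le 1$ for anticommuting Hermitian involutions $A,B$. To prove it, set $C=aA+bB$ with $a^2+b^2=1$; anticommutation gives $C^2=I$, so $|\mathrm{Tr}(\rho C)|\le 1$, and optimizing over $(a,b)$ gives $\sqrt{\mu_A^2+\mu_B^2}\le1$.

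Applying this with $A=H_i$ and $B=Q$:
\[
\mu_Q^2\le 1-\mu_{H_i}^2\le 1-(1-\epsilon_i)^2=2\epsilon_i-\epsilon_i^2,
\]
where the middle step uses $1-\epsilon_i\ge0$ (as $\epsilon_i\le 1$). Taking the minimum over all admissible $i$ gives the stated bound.

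\emph{Counting for items 2 and 3.} Identify the $4^n$ Pauli strings, modulo phase, with $\mathbb F_2^{2n}$ equipped with the symplectic form. Commutation with $G_i$ is then a linear condition $\omega(v,g_i)=0$. Independence of the $G_i$ means the vectors $g_i$ are linearly independent, and the form is nondegenerate, so these $r$ linear functionals are independent. Their common kernel, the centralizer, therefore has $2^{2n-r}$ elements. This leaves $4^n-2^{2n-r}$ strings that anticommute with at least one $G_i$, as claimed in item 2.

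The centralizer contains the image of $\mathcal S$, which is $2^r$ elements: it is an isotropic subspace of dimension $r$, and the sign choices $s_i$ are irrelevant modulo phase. The remainder is $2^{2n-r}-2^r$, which is item 3.

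\emph{Main obstacle.} The step most likely to cause trouble is making item 1 clean. One needs the operator inequality rather than a naive product argument, and one must check that it handles the signs correctly. The eigenvalue-counting argument above is what I would try first, since in the joint eigenbasis the inequality reduces to checking $\pm1$ values. A secondary care point is that in items 2 and 3 "Pauli operators" should be read up to phase, so that the counts come out as stated.
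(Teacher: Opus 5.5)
Your proposal is correct and follows the paper's proof. Your projector inequality $(I-H_{\mathcal K})/2\le\sum_{i\in\mathcal K}(I-H_i)/2$ is exactly one half of the all-plus-signs case $B(\mathbf s)\succeq 0$ of the paper's commuting-family inequality, your $C=aA+bB$ argument is the paper's anticommuting variance bound, and your symplectic centralizer count (with $\mathcal S$ an $r$-dimensional isotropic subspace inside it) matches the paper's Claim 3.
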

\begin{proof}Consider first two anticommuting Pauli strings $P,Q$: $\{P,Q\}=0$. The  positivity gives the non-strict inequality
\begin{equation}
    \mu_P^2+\mu_Q^2\le 1.
    \label{eq:anticommuting-bound}
\end{equation}
If $|\mu_P|=1-\epsilon$, with $0\le \epsilon \le 1$, then after $M$ shots we can certify, with probability at least $1-\delta$, that
\begin{equation}
    |\hat \mu_P-\mu_P|\le a_M,
    \label{eq:bernstein-event}
\end{equation}
where $a_M$ is the Bernstein radius~\cite{BoucheronLugosiMassart2013}
\begin{equation}
    a_M=O\bigg(\sqrt{\frac{\epsilon(2-\epsilon)\Delta}{M}}+\frac{(2-\epsilon)\Delta}{M}\bigg),\qquad \Delta=\ln(2/\delta).
    \label{eq:bernstein-radius-scaling}
\end{equation}
By defining the upper bound $U_Q=\sqrt{1-(|\hat \mu_P|-a_M)_+^2}$ it is guaranteed that $|\mu_Q|\le U_Q$, simultaneously for every $Q$ anticommuting with $P$. (see also Appendix \ref{sec:comm}). For fixed $0<\epsilon<1$,
\begin{equation}
    U_Q=\sqrt{\epsilon(2-\epsilon)}+O_\epsilon\bigg(\sqrt{\Delta/M}\bigg).
    \label{eq:anticommuting-asymptotic}
\end{equation}
At fixed $\epsilon=0$, the certified bound $U_Q$ scales as $O(\sqrt{\Delta/M})$, while at $\epsilon=1$ the bound is uninformative.\\
For mutually commuting $P_1,\dots,P_m$, let $P_{\mathcal K}=\prod_{i\in\mathcal K}P_i$, where $\mathcal K=\{1,\dots,m\}$. For every $s_i\in \{\pm 1\}$,
\begin{equation}
    \sum_{i\in\mathcal K} s_i\mu_{P_i}-\bigg(\prod_{i\in\mathcal K} s_i\bigg)\mu_{P_{\mathcal K}}\le m-1.
    \label{eq:commuting-family-main}
\end{equation}
Choose $s_i=\operatorname{sgn}(\mu_{P_i})$ (with $\operatorname{sgn}(0)=1$), define $H_i=s_iP_i$ so that $\mu_{H_i}=1-\epsilon_i\ge0$, and denote the corresponding signed product by $H_{\mathcal K}$. Writing $\epsilon_\Sigma = \sum_{i\in\mathcal K}\epsilon_i$ and $\epsilon_{\max}=\max_{i\in\mathcal K}\epsilon_i$, we obtain
\begin{equation}
    \max\{-1,1-\epsilon_\Sigma\}\le \mu_{H_{\mathcal K}}\le \min\{1,1+\epsilon_\Sigma -2\epsilon_{\max}\}.
    \label{eq:commuting-interval-main}
\end{equation}
The interval width is $W_{\mathcal K}=\min\{2,\epsilon_\Sigma\}-\max\{0,2\epsilon_{\max}-\epsilon_\Sigma\}$. It vanishes when at most one $\epsilon_i$ is nonzero. In addition, the estimated endpoints converge with error $O(\sum_i a_{M_i})$. If all $\epsilon_i$ are equal and $m\ge 2$, then $W_{\mathcal K}=\min\{2,m\epsilon\}$, which means that individual deviations can accumulate in long commuting products.\\
Combining the preceding results proves the theorem. Indeed,
for $r=\Theta(n)$, the uncontrolled fraction is below $2^{-r}$. For $r=n$, all Pauli operators are controlled, and the worst-case error is at most $\max\{r\epsilon,\sqrt{2\epsilon}\}$. In terms of expectation values,  exponentially small $\epsilon$ leads to exponentially small propagated error (See Appendix \ref{sec:th} for full details). 
\end{proof}


Correlation bounds for Pauli observables can be employed to inform full state reconstruction with zero information cost, a challenging computational and experimental task. Full quantum state tomography indeed requires resources exponential in the system size~\cite{Hradil1997,James2001,Haah2017}. Structured reconstruction and compressed-sensing approaches can reduce this cost under additional assumptions, but they do not remove the dimensional dependence of generic tomography~\cite{Gross2010,Cramer2010,Flammia2012,Baumgratz2013,Lanyon2017}. \\
A related and equally important task is to
estimate  nonlinear properties  of quantum systems, bypassing expensive full state reconstruction. Specifically, interesting quantum features can be quantified by a  polynomial function $f$ of degree $d_\alpha$ whose terms are  Pauli observable products. In case of non-polynomial functions, Chebyshev or Taylor expansions can generate a polynomial approximation $\tilde f$ with arbitrary degree of precision:
\begin{equation}
    f(\rho)\simeq \tilde f(\rho)=\sum_{\alpha}c_\alpha \prod_{j=1}^{d_\alpha}\mu_{P_{\alpha_j}},\label{eq:polynomial-property}
\end{equation}
where $\{c_\alpha\}$ are complex coefficients. \\
 Recasting any function in terms of Pauli observables suggests that correlations between them help reduce the necessary measurement budget of their learning protocols. This method departs from the
 several strategies that have been proposed to estimate these important quantum properties, and, more generally, for exploring nonlinear functions of quantum states.
Shadow tomography and classical shadows instead prioritize the prediction of selected properties over complete state reconstruction, as does direct fidelity estimation~\cite{Aaronson2018,Huang2020,FlammiaLiu2011,DaSilva2011,Zhang2021,Struchalin2021,Chen2021,KohGrewal2022}. Randomized-measurement protocols  provide access to nonlinear quantities such as purities and entanglement entropies~\cite{Elben2018,Brydges2019,Elben2020,Elben2023}. However, in standard local-Pauli classical shadows a Pauli string $P$ of weight $\operatorname{wt}(P)$ is covered with probability $3^{-\operatorname{wt}(P)}$, so high-weight observables can become statistically expensive~\cite{Huang2020}. Derandomized shadows, biased measurement ensembles, and importance-sampling strategies have been introduced to mitigate this limitation~\cite{Huang2021Derandomization,Hadfield2022,Hillmich2021,Rath2021,Hadfield2021Adaptive}.
\section*{\label{sec:comp-results}Correlation-informed estimation algorithms}

In this work we focus on estimation of the marginal purity $\mathcal P_A=\text{Tr}(\rho^2_A)$ of a given subsystem characterized by $|A|$ observables of the set $\mathcal P_{|A|}$, and the stabilizer R\'enyi-2 entropy, which respectively probe bipartite entanglement in pure states and nonstabilizerness~\cite{Leone2022,LiuWinter2022,HaugPiroli2023}:
\begin{align}\label{eq:stabilizer-renyi}
    \mathcal P_A(\rho) &= 2^{-|A|}\sum_{P_A\in \mathcal  P_{|A|}}\mu_{P_A}^2,
    \\
    \mathcal M_2(\rho) &= -\log_2\mathcal A_4(\rho),\qquad \mathcal A_4(\rho)=2^{-n}\sum_{P\in \mathcal P_n}\mu_P^4.\nonumber
\end{align}
 We design a CorInf algorithm to estimate these two quantities in   2D cluster states perturbed by local non-Clifford gates.
They are pure states constructed as follows. Let $G=(V,E)$ be a rectangular grid with $|V|=n$ and 
\begin{equation}
    |C_G\rangle = \prod_{(u,v)\in E}CZ_{uv}|+\rangle ^{\otimes n}
    \label{eq:cluster-state}
\end{equation}
be the corresponding cluster state~\cite{BriegelRaussendorf2001,RaussendorfBriegel2001,RaussendorfBrowneBriegel2003,Hein2004}. Its graph-state generators are
\begin{equation}
    H_i=X_i \prod_{j\in N(i)}Z_j,\qquad i\in V.
    \label{eq:cluster-generators}
\end{equation}
The benchmark uses the rotated family
\begin{equation}
    |\psi _{R,\theta}\rangle = \prod_{j \in R}e^{i\theta Z_j}|C_G\rangle, \qquad |R|=k.
    \label{eq:rotated-cluster}
\end{equation}
Since $Z_j$ commutes with every $H_i$ for $i\ne j$ and anticommutes with the $X_j$ component of $H_j$, the rotation $e^{i\theta Z_j}$ leaves the former generators invariant and changes the expectation value of the latter. 
Although the rotations drive some generator expectation values away from their saturated values, these generators remain informative: their deviations can quantify the correlation bounds. Together with the unaffected generators, they still provide sufficient information to constrain the remaining Pauli expectations, rather than becoming useless since no longer saturated.
The original stabilizer generators therefore provide accessible witnesses of where the state departs from the stabilizer setting~\cite{AaronsonGottesman2004}. The target quantities were introduced in Eq.~(\ref{eq:stabilizer-renyi}). Noticeably, 
the stabilizer Rényi entropy of this rotated-cluster family admits the exact form
\begin{equation}
    \mathcal M_2=-k\log_2\bigg[\frac{1+\cos ^4(2\theta)+\sin^4(2\theta)}{2}\bigg].
    \label{eq:rotated-cluster-magic}
\end{equation}
In the simulations, we set $n=100$, $k=50$, and $\theta=\pi/8$. The purity experiments use a subsystem of size $|A|=7$, for which the exact value is $\mathcal P_A=0.015625$. For the magic experiments, $\mathcal M_2=50\log_2(4/3)\approx 20.7519$. Each plotted point is the mean of 20 independent repetitions, and the shaded band shows the standard error of the mean.

We run a comparison of a CorInf algorithm against classical shadow on the cluster state testbed.
A local-Pauli measurement setting is $B=(B_1,\dots,B_n)$, where $B_j\in \{X,Y,Z\}$. The standard local classical-shadows (CS) procedure samples each axis independently and uniformly from $\{X,Y,Z\}$~\cite{Huang2020}. As a result, a Pauli string $P$ of weight $\operatorname{wt}(P)$ is covered by a CS measurement setting with probability $3^{-\operatorname{wt}(P)}$. Our CorInf method replaces this fixed design with a distribution $q_t(B)$ determined before observing the outcome at time $t$. This distribution uses anticommutation to turn nearly saturated selected observables into upper bounds for incompatible Pauli coordinates, while commuting products help prioritize measurement bases. The full scoring rules and estimators are reported in Appendix~\ref{sec:prot}.\\
Conditional on the past observations $\mathcal F_{t-1}$, the CorInf sampling distribution is fixed before the current outcome is observed. We can therefore define
\begin{equation}
    Z_t(P)=\frac{I_t(P)x_t(P)}{Q_t(P)},
    \label{eq:ipw-observable}
\end{equation}
where $I_t(P)$ indicates whether $P$ is covered by the setting $B_t$, $x_t(P)\in\{\pm1\}$ is the corresponding parity, and $Q_t(P)=\mathbb P[B_t\text{ covers }P\mid\mathcal F_{t-1}]$. It follows that~\cite{HorvitzThompson1952}
\begin{equation}
    \mathbb E[Z_t(P)\mid\mathcal F_{t-1}]=\mu_P.
    \label{eq:ipw-unbiased}
\end{equation}
Thus, within each benchmark, uniform CS and CorInf can use the same inverse-probability estimator, with $Q_t(P)$ determined by the measurement strategy. For the purity curve reported below, CorInf additionally uses the plug-in bound projection described in Appendix~\ref{sec:purity-estimator}.

\begin{figure}[htbp]
    \centering
    \includegraphics[width=\columnwidth]{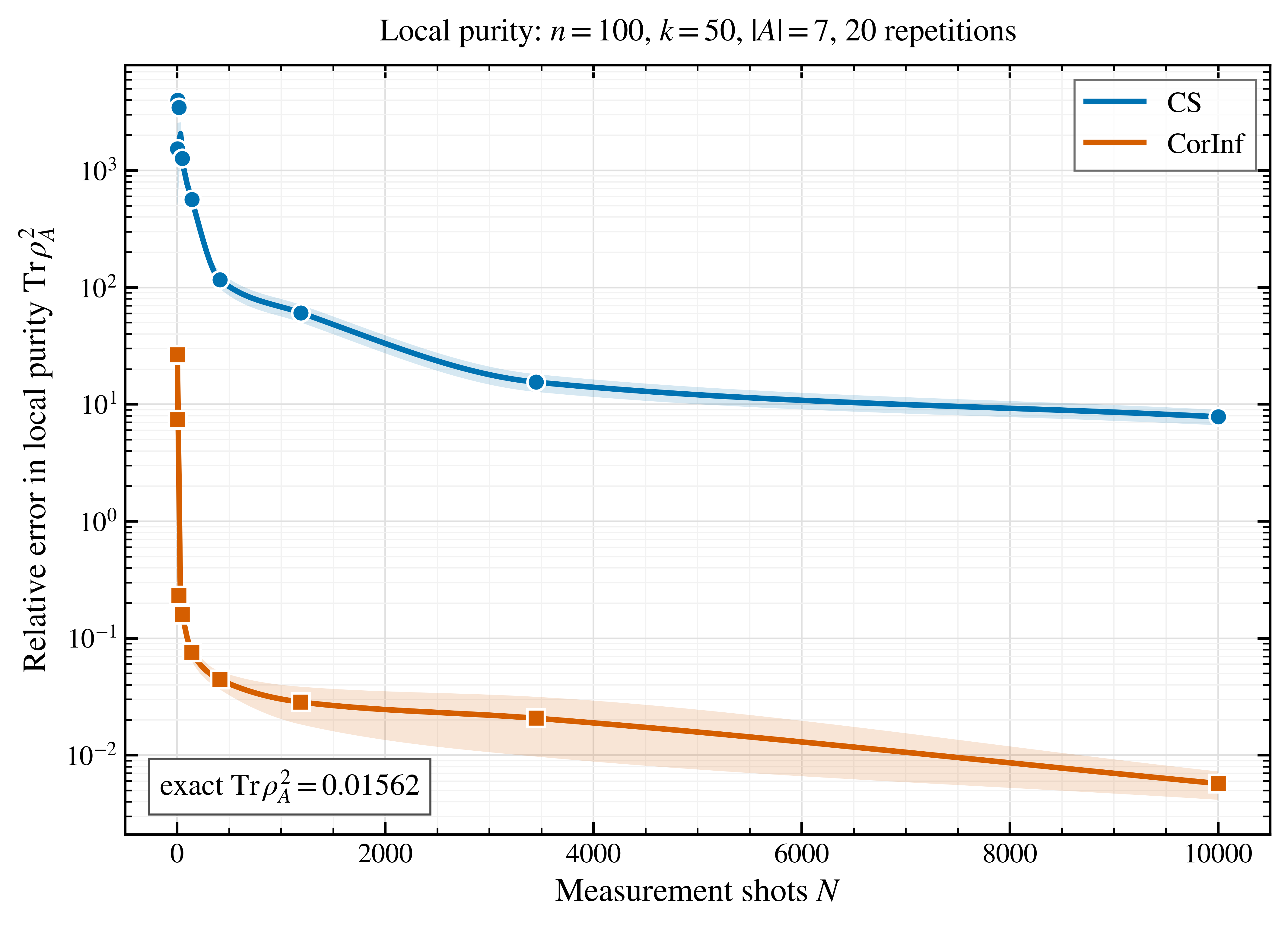}
    \caption{Relative error of the local purity estimation for a 100-qubit rotated cluster state, with $k=50$ rotated vertices and $|A|=7$. CS uses uniform local Pauli settings, while CorInf uses the adaptive CorInf distribution and the plug-in bound projection of Appendix~\ref{sec:purity-estimator}. Each point is the mean of 20 repetitions; the shaded band is the standard error of the mean.}
    \label{fig:exp_purity}
\end{figure}

Figure~\ref{fig:exp_purity} shows that CorInf achieves a lower error than CS. Given the largest measurement budget, the relative errors are of order $10$ for CS and $10^{-2}$ for CorInf. CS spends most of its budget on Pauli directions that are irrelevant to the target, whereas CorInf transfers the information stored in a small number of generators to a large set of unmeasured Pauli coordinates, and it reallocates subsequent shots according to the resulting bounds. The CorInf curve also benefits from the data-derived clipping step detailed in Appendix~\ref{sec:purity-estimator}; consequently, this comparison reflects both adaptive acquisition and bound projection.

We further compare CorInf methods against a more nuanced estimation strategy, based on the neural state approximation.
This second computational route combines the same measurement process with a physical state-reconstruction model, denoted AGHDO~\cite{VicentiniRossiCarleo2022}. Positivity-constrained and neural representations provide related reconstruction approaches~\cite{Hradil1997,TorlaiMelko2018,Carrasquilla2019}. For a local subsystem, the density operator is parametrized as
\begin{equation}
    \rho_{\boldsymbol\phi} = \frac{F_{\boldsymbol\phi} F_{\boldsymbol\phi}^\dagger}{\operatorname{Tr}(F_{\boldsymbol\phi} F_{\boldsymbol\phi}^\dagger)}.
    \label{eq:aghdo-density}
\end{equation}
Positivity is enforced by construction, which is particularly compatible with our approach: our main theoretical result (Theorem~\ref{thm:pauli-propagation}) follows from state-independent quantum constraints, and the parametrization in Eq.~\eqref{eq:aghdo-density} automatically excludes unphysical combinations of Pauli expectations. CorInf complements the reconstruction by identifying measurement directions that the bounds indicate as informative. For the magic estimation, we use the factorized logical-mode version described in Appendix~\ref{sec:AGHDOapp}. Pauli-space sampling offers another route for structured states~\cite{LamiCollura2023}.\\
The estimator is fixed: CS-AGHDO and CorInf-AGHDO  share the same parametrization, likelihood, and optimizer. The only difference is the distribution of local Pauli measurement settings.

As shown in Fig.~\ref{fig:exp_magic}, adaptive acquisition is plagued by a ``discovery'' cost: before the relevant Pauli operators are identified, CorInf sampling can be less accurate than uniform sampling. Once the generator structure is discovered, the CorInf design concentrates measurements on informative Pauli operators and the error decreases rapidly. At the maximum budget, the mean relative error is about $1.17\%$ for CorInf-AGHDO and $6.88\%$ for CS-AGHDO. Since the estimator and model class are identical, this difference is attributable to the acquisition policy. Note that, conversely to direct estimators, AGHDO also introduces model error through its parametrization of the physical family, as explained in Appendix~\ref{sec:AGHDOapp}.

\begin{figure}[htbp]
    \centering
    \includegraphics[width=\columnwidth]{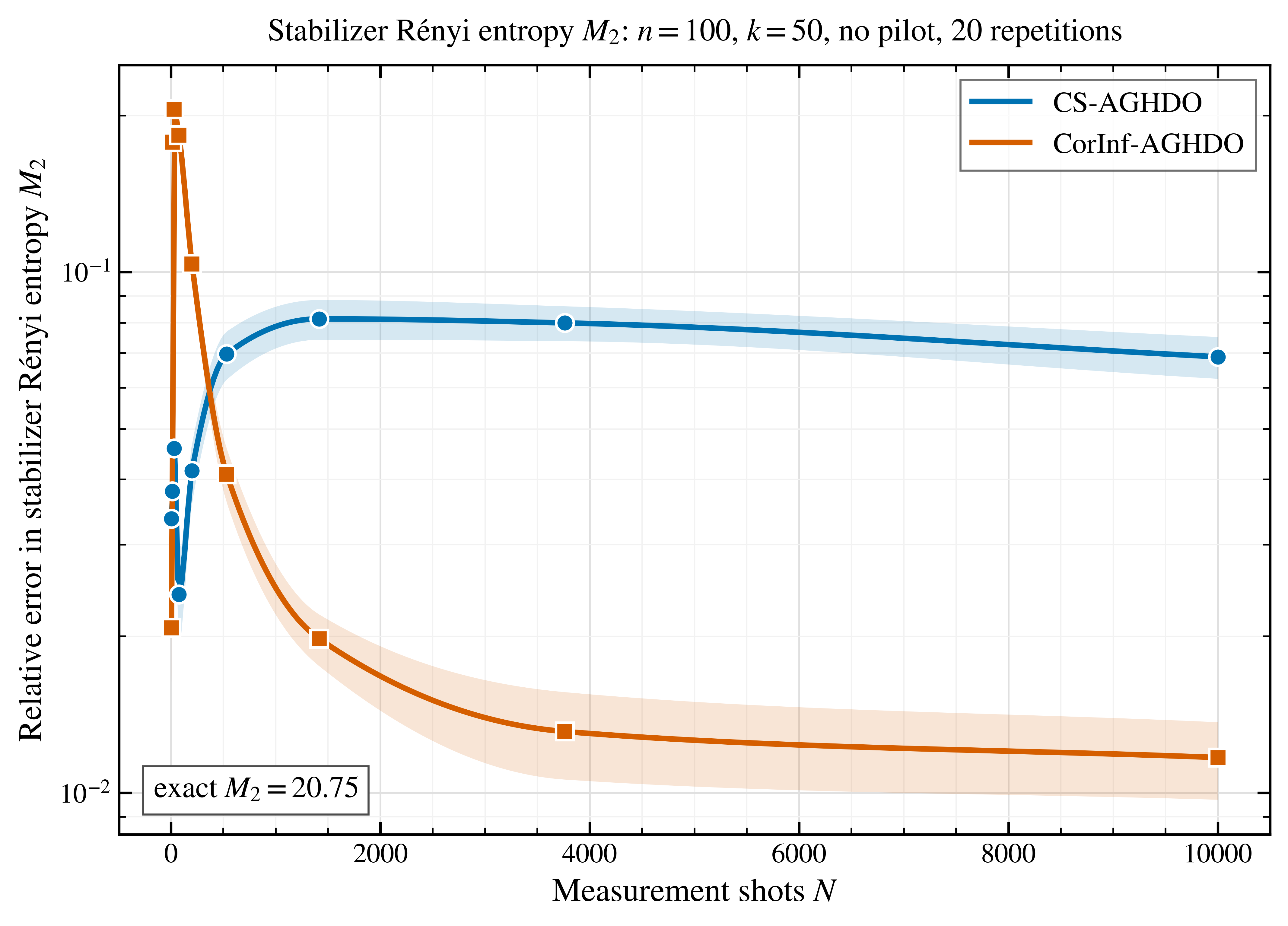}
    \caption{Relative error of the stabilizer Rényi entropy for a 100-qubit rotated cluster state with $k=50$ rotated vertices. CS-AGHDO and CorInf-AGHDO use the same AGHDO estimator; only the measurement distribution differs. Each point is the mean of 20 repetitions; the shaded band is the standard error of the mean.}
    \label{fig:exp_magic}
\end{figure}

\section*{\label{sec:discussions}Discussion}
We proposed a physical principle for adaptive quantum learning. We leveraged  fundamental bounds to correlations among quantum observables to improve estimation of entanglement and magic, quantified by Renyi entropy functions, in 100 qubit systems.  The result shows that embedding  this state-independent, additional knowledge in machine learning routines reduces the relative error in estimation of generic properties of quantum systems, with no additional complexity of canonical spin measurements. We therefore anticipate that equipping state-of-the-art protocols with correlation-informed priors will facilitate scalable quantum learning, which is crucial for exploration, control, and benchmarking of quantum devices.\\
For example, we expect improvement in evaluation of generic, non-polynomial functions, e.g., von Neumann entropy and parent entanglement quantifiers, and in computing ground states of molecular Hamiltonians.\\
Also, we constructed algorithms that rely on direct estimation of Pauli strings, with no entangling premeasurement gates required. Therefore, they are readily testable with NISQ devices. Similarly, we foresee that estimation strategies tailored to full-fledged quantum computers, i.e., allowing for implementation of multipartite Clifford and non-Clifford gates, can be  boosted by supplemental, zero-cost information provided by quantum correlations.

\section*{\label{sec:code}Code availability}
Source code for the implementation of the proposed procedure is provided at this \href{https://github.com/teddematteo/corr-inf-quant-learning}{GitHub Repository}.

\section*{\label{sec:acknowledgements}Acknowledgements}
M. Tedde is supported by an Honors School Studentship of Politecnico di Torino. We acknowledge use of OpenAI GPT-6 Sol for assistance with coding, and crosschecking of literature and calculations.

\bibliographystyle{apsrev4-2}
\bibliography{references}


\appendix

\section{\label{sec:comm}Algebraic bounds}

\subsection{Anticommuting observables}
Let $P_1,\dots,P_\ell$ be pairwise anticommuting Pauli operators. For a real vector $\mathbf a$, consider $A=\sum_i a_iP_i$. Then $A^2=\sum_i a_i^2\mathbf I$. Since the variance is nonnegative, we obtain
\begin{equation}
    \bigg(\sum_i a_i\mu_{P_i}\bigg)^2\le \sum_ia_i^2.
    \label{eq:anticommuting-family}
\end{equation}
Choosing $a_i=\mu_{P_i}$ gives $(\sum_i \mu_{P_i}^2)^2\le\sum_i \mu_{P_i}^2$ and hence $\sum_i \mu_{P_i}^2\le 1$, proving Eq.~\eqref{eq:anticommuting-bound}. Applying this inequality separately to each $Q$ that anticommutes with $P$ bounds $|\mu_Q|$ in terms of $|\mu_P|$. For independent shots $Y_j\in\{\pm 1\}$ with mean $\mu_P$ and $|\mu_P|=1-\epsilon$ ($0\le \epsilon \le 1$), we have $\operatorname{Var}(Y_j)=1-\mu_P^2=\epsilon(2-\epsilon)$ and $|Y_j-\mu_P|\le 2-\epsilon$. Thus, a two-sided Bernstein bound gives
\begin{equation}
    \mathbb P(|\hat \mu_P-\mu_P|\ge a)\le 2\exp \bigg[-\frac{Ma^2}{2\epsilon(2-\epsilon)+\frac{2}{3}(2-\epsilon)a}\bigg].
    \label{eq:bernstein-tail}
\end{equation}
Solving for the failure probability $\delta$ gives, with $\Delta=\ln(2/\delta)$,
\begin{equation}
    a_M=\frac{(2-\epsilon)\Delta}{3M}+\sqrt{\frac{(2-\epsilon)^2\Delta^2}{9M^2}+\frac{2\epsilon(2-\epsilon)\Delta}{M}}.
    \label{eq:bernstein-radius}
\end{equation}
On the event $|\hat \mu_P-\mu_P|\le a_M$, define $L_P=(|\hat\mu_P|-a_M)_+$. Then
\begin{equation}
    (1-\epsilon-2a_M)_+\le L_P\le 1-\epsilon.
    \label{eq:empirical-lower-bound}
\end{equation}
Consequently, Eq.~\eqref{eq:anticommuting-bound} gives $|\mu_Q|\le U_Q=\sqrt{1-L_P^2}$, with
\begin{equation}
    \sqrt{\epsilon(2-\epsilon)}\le U_Q \le \sqrt{1-\bigl[(1-\epsilon-2a_M)_+\bigr]^2},
    \label{eq:certified-upper-bound}
\end{equation}
for every $Q$ anticommuting with $P$. For fixed $0<\epsilon<1$, a Taylor expansion about $\epsilon(2-\epsilon)$ gives
\begin{equation}
    U_Q-\sqrt{\epsilon(2-\epsilon)}\le \frac{2(1-\epsilon)}{\sqrt{\epsilon(2-\epsilon)}}a_M+O_\epsilon(a_M^2).
    \label{eq:anticommuting-taylor}
\end{equation}
At $\epsilon=0$, Eq.~\eqref{eq:bernstein-radius} gives $a_M=4\Delta/(3M)$, and
\begin{equation}
    U_Q\le 2\sqrt{a_M}=O(\sqrt{\Delta/M}).
    \label{eq:anticommuting-saturated-scaling}
\end{equation}
For $\epsilon=1$, the confidence interval gives $L_P=0$ and $U_Q=1$, so the bound is uninformative. If $|\mu_P|=1$ is known exactly, Eq.~\eqref{eq:anticommuting-bound} implies $\mu_Q=0$ for every $Q$ anticommuting with $P$; from finite data alone, the certified upper bound instead follows the scaling in Eq.~\eqref{eq:anticommuting-saturated-scaling}.

\subsection{Commuting products}

Let $P_1,\ldots,P_m$ be mutually commuting Pauli operators, let $\mathcal K=\{1,\ldots,m\}$, and define
\begin{equation}
    P_{\mathcal K}=\prod_{i=1}^{m}P_i .
    \label{eq:commuting-product}
\end{equation}
Since the $P_i$ commute, $P_{\mathcal K}$ is Hermitian. For an arbitrary
sign vector $\mathbf s=(s_1,\ldots,s_m)\in\{\pm1\}^m$, consider the operator
\begin{equation}
    B(\mathbf s)
    =(m-1)\mathbf I
    +\left(\prod_{i=1}^{m}s_i\right)P_{\mathcal K}
    -\sum_{i=1}^{m}s_iP_i .
\end{equation}
The commuting observables $P_i$ admit a common eigenbasis. On a joint eigenstate,
let $p_i\in\{\pm1\}$ be the eigenvalue of $P_i$. The corresponding eigenvalue
of $B(\mathbf s)$ is
\begin{equation}
    b_{\mathbf s}(\mathbf p)
    =
    m-1
    +\prod_{i=1}^{m}(s_i p_i)
    -\sum_{i=1}^{m}s_i p_i .
\end{equation}
If $\prod_i s_i p_i=+1$, then
\begin{equation}
    b_{\mathbf s}(\mathbf p)
    =
    m-\sum_{i=1}^{m}s_i p_i
    \ge 0 ,
\end{equation}
whereas, if $\prod_i s_i p_i=-1$,
\begin{equation}
    b_{\mathbf s}(\mathbf p)
    =
    m-2-\sum_{i=1}^{m}s_i p_i
    \ge 0 .
\end{equation}
Hence $B(\mathbf s)\succeq0$ for every sign vector $\mathbf s$. Since
$\rho\succeq0$, it follows that
\begin{equation}
    \operatorname{Tr}(\rho B(\mathbf s))\ge0 ,
\end{equation}
and therefore
\begin{equation}
    \sum_{i=1}^{m}s_i\mu_{P_i}
    -
    \left(\prod_{i=1}^{m}s_i\right)\mu_{P_{\mathcal K}}
    \le m-1 ,
    \qquad
    \forall\,\mathbf s\in\{\pm1\}^m .
    \label{eq:commuting-family}
\end{equation}
Equation~\eqref{eq:commuting-family} gives a family of lower and upper
bounds for the expectation value of the product. In particular,
\begin{align}
    \mu_{P_{\mathcal K}}
    &\ge
    \sum_{i=1}^{m}s_i\mu_{P_i}-(m-1),
    &&
    \prod_i s_i=+1 ,
    \\
    \mu_{P_{\mathcal K}}
    &\le
    (m-1)-\sum_{i=1}^{m}s_i\mu_{P_i},
    &&
    \prod_i s_i=-1 .
\end{align}
The tightest interval compatible with these inequalities can be
written as
\begin{align}
    L_{\mathcal K}(\bm\mu)
    &=
    \max\left\{
        -1,\,
        \max_{\substack{\mathbf s\in\{\pm1\}^m\\
                        \prod_i s_i=+1}}
        \left[
            \sum_{i=1}^{m}s_i\mu_{P_i}-(m-1)
        \right]
    \right\},
    \\
    U_{\mathcal K}(\bm\mu)
    &=
    \min\left\{
        1,\,
        \min_{\substack{\mathbf s\in\{\pm1\}^m\\
                        \prod_i s_i=-1}}
        \left[
            (m-1)-\sum_{i=1}^{m}s_i\mu_{P_i}
        \right]
    \right\},
\end{align}
so that
\begin{equation}
    L_{\mathcal K}(\bm\mu)
    \le
    \mu_{P_{\mathcal K}}
    \le
    U_{\mathcal K}(\bm\mu).
\end{equation}

For the application considered in the main text, it is convenient to absorb
the signs of the measured expectations into the observables. Let
\begin{equation}
    H_i=\sigma_i P_i,
    \qquad
    \sigma_i=\operatorname{sgn}(\mu_{P_i}),
    \qquad \operatorname{sgn}(0)=1,
\end{equation}
such that
\begin{equation}
    \mu_{H_i}=|\mu_{P_i}|=1-\epsilon_i,
    \qquad 0\le\epsilon_i\le1 ,
\end{equation}
and define
\begin{equation}
    H_{\mathcal K}=\prod_{i\in\mathcal K}H_i,
    \qquad
    \epsilon_{\Sigma}=\sum_{i\in\mathcal K}\epsilon_i,
    \qquad
    \epsilon_{\max}=\max_{i\in\mathcal K}\epsilon_i .
\end{equation}
Choosing all signs positive in Eq.~\eqref{eq:commuting-family} gives
\begin{equation}
    \mu_{H_{\mathcal K}}
    \ge
    1-\epsilon_{\Sigma}.
\end{equation}
Including the trivial physical bound $\mu_{H_{\mathcal K}}\ge-1$ yields
\begin{equation}
    \mu_{H_{\mathcal K}}
    \ge
    \max\{-1,1-\epsilon_{\Sigma}\}.
\end{equation}

For the upper bound, the tightest sign configuration with
$\prod_i s_i=-1$ is obtained by flipping the observable with the smallest
$|\mu_{H_i}|$, equivalently the largest $\epsilon_i$. This gives
\begin{equation}
    \mu_{H_{\mathcal K}}
    \le
    1+\epsilon_{\Sigma}-2\epsilon_{\max},
\end{equation}
and after imposing $\mu_{H_{\mathcal K}}\le1$,
\begin{equation}
    \max\{-1,1-\epsilon_{\Sigma}\}
    \le
    \mu_{H_{\mathcal K}}
    \le
    \min\{1,1+\epsilon_{\Sigma}-2\epsilon_{\max}\}.
    \label{eq:commuting-compact-bound}
\end{equation}

The width of the resulting interval is
\begin{equation}
    W_{\mathcal K}
    =
    \min\{2,\epsilon_{\Sigma}\}
    -
    \max\{0,2\epsilon_{\max}-\epsilon_{\Sigma}\}.
    \label{eq:commuting-width}
\end{equation}
The interval therefore collapses to a single point if and only if all but
at most one of the $\epsilon_i$ vanishes. In particular, if
$\epsilon_j$ is the only nonzero deviation, then the commuting product is
completely determined by the remaining saturated observables and by
$\mu_{H_j}$.

More generally, if the measured expectations are known only up to errors
$a_{M_i}$,
\begin{equation}
    |\hat\mu_{H_i}-\mu_{H_i}|\le a_{M_i},
\end{equation}
the endpoints of Eq.~\eqref{eq:commuting-compact-bound} are Lipschitz
functions of the individual expectations. Consequently their statistical
uncertainty is bounded by
\begin{equation}
    O\left(\sum_{i\in\mathcal K}a_{M_i}\right),
\end{equation}
and the empirical interval width obeys
\begin{equation}
    \widehat W_{\mathcal K}
    =
    W_{\mathcal K}
    +
    O\left(\sum_{i\in\mathcal K}a_{M_i}\right).
\end{equation}
Thus, when the commuting generators are nearly saturated, measuring a
polynomial number of them constrains the expectation values of their
exponentially many products without measuring those products directly.

\section{\label{sec:th}Proof and interpretation of Theorem~\ref{thm:pauli-propagation}}

\subsection{Claim 1}
Consider the commuting inequality for the observables $\{H_i:i\in \mathcal K\}$ with all signs equal to $+1$. Taking $m=|\mathcal K|$ gives
\begin{equation}
    \sum_{i\in \mathcal K}\mu_{H_i}-\mu_{H_\mathcal K}\le m-1.
\end{equation}
As a result,
\begin{align}
    \mu_{H_\mathcal K} &\ge \sum_{i\in \mathcal K}\mu_{H_i}-(m-1)\\
    &\ge \sum_{i\in \mathcal K}(1-\epsilon_i)-(m-1)\\
    &=1-\sum_{i\in \mathcal K}\epsilon_i.
\end{align}
Since every Pauli has expectation at most 1,
\begin{equation}
    1-\sum_{i\in \mathcal K}\epsilon_i\le \mu_{H_\mathcal K}\le 1,
\end{equation}
which proves the first claim.

\subsection{Claim 2}
If $\{Q,H_i\}=0$, then Eq.~\eqref{eq:anticommuting-bound} implies
\begin{equation}
    \mu_Q^2+\mu_{H_i}^2\le 1,
\end{equation}
and since $\mu_{H_i}\ge 1-\epsilon_i$,
\begin{equation}
    \mu_Q^2\le 1-(1-\epsilon_i)^2=2\epsilon_i-\epsilon_i^2.
\end{equation}
Taking the narrowest bound gives
\begin{equation}
    |\mu_Q|\le \min_{i:\{Q,G_i\}=0}\sqrt{2\epsilon_i-\epsilon_i^2}.
\end{equation}

\subsection{Claim 3}
Modulo phases, the $4^n$ Pauli operators form the vector space $\mathbb F_2^{2n}$. Commutation with all $r$ independent generators imposes $r$ independent linear equations. The centralizer
\begin{equation}
    \mathcal C(\mathcal S)=\{Q:[Q,G_i]=0\ \text{for all }i\}
\end{equation}
has dimension $2n-r$ and cardinality
\begin{equation}
    |\mathcal C(\mathcal S)|=2^{2n-r}.
\end{equation}
All the remaining
\begin{equation}
    4^n-2^{2n-r}
\end{equation}
Pauli operators anticommute with at least one generator and are therefore controlled by Claim 2. Finally, $|\mathcal S|=2^r$, so $2^{2n-r}-2^r$ operators commute with all generators but lie outside $\mathcal S$, proving Claim 3.

\section{\label{sec:prot}Measurement protocol and estimators}
We now specify the measurement distributions and estimators used in our procedure. We consider four protocols obtained by combining two measurement strategies with two inference schemes. The first pair, denoted classical shadows (CS) and CorInf shadows (CorInf), estimates the target property directly from the measurement outcomes through inverse-probability-weighted estimators. Before the optional plug-in projection used for the reported CorInf purity curve, both methods use the same direct estimator. The second pair, denoted CS-AGHDO and CorInf-AGHDO, first reconstructs a physical parametrization of the state through AGHDO and then evaluates the target property on the reconstructed model. For this second pair, the estimator is fixed and only the distribution of local Pauli measurement settings changes.

\subsection{Uniform and adaptive measurement distributions}
Define a local Pauli setting $B=(B_1,\dots,B_n)\in \{X,Y,Z\}^{\otimes n}$. We say that $B \succeq P$ ($P$ is covered by $B$) when $B_j=P_j$ for every $j\in \operatorname{supp}(P)$. Define $I_t(P)=\mathbf 1[B_t \succeq P]$, and let $x_t(P)\in \{\pm 1\}$ be the corresponding parity at time $t$. The uniform classical-shadows (CS) procedure uses
\begin{equation}
    q^{\text{CS}}(B)=3^{-n},\qquad Q^{\text{CS}}(P)=\sum_{B\succeq P}q^{\text{CS}}(B)=3^{-\operatorname{wt}(P)}.
    \label{eq:cs-distribution}
\end{equation}
Our CorInf procedure (CorInf) instead uses a conditional distribution $q_t^{\text{CorInf}}$ that is measurable with respect to the past $\mathcal F_{t-1}$. To ensure full support, we add a uniform exploration component of weight $\eta \in (0,1]$,
\begin{equation}
    q_t(B)=(1-\eta)q_t^{\text{CorInf}}(B)+\eta 3^{-n},
    \label{eq:bi-mixture}
\end{equation}
which gives
\begin{equation}
    Q_t(P)=\sum_{B\succeq P} q_t(B)=(1-\eta)\sum_{B\succeq P}q_t^{\text{CorInf}}(B)+\eta 3^{-\operatorname{wt}(P)}.
    \label{eq:bi-inclusion}
\end{equation}
Since $q_t$ is constructed before the observation at time $t$,
\begin{equation}
    \mathbb E\bigg[\frac{I_t(P)x_t(P)}{Q_t(P)}\,\bigg|\,\mathcal F_{t-1}\bigg]=\mu_P.
    \label{eq:adaptive-unbiased}
\end{equation}
This is the backbone for all our CS and CorInf estimators.

The CorInf Selector starts from uninformative generator statistics and updates them online. Neither the exact target values nor the hidden rotation set is shown to the selector. In the AGHDO comparison, the same family-specific model and logical-mode map are used for both the uniform and CorInf strategies, so their difference is an acquisition effect.

\subsection{Local purity estimation}
Let the subsystem $A$ contain $a=|A|$ qubits, and understand each $P_A$ below as $P_A\otimes I_{\bar A}$ when comparing it with a global generator. Only generators intersecting $A$ can constrain a Pauli operator supported on $A$:
\begin{equation}
    \mathcal H_A=\{H_i:\operatorname{supp}(H_i)\cap A\ne \varnothing\}.
    \label{eq:local-generators}
\end{equation}
For each $H_i\in \mathcal H_A$, after $n_i(t)$ covering shots, CorInf stores the estimate $\hat h_i(t)$. We define the estimated deficiency
\begin{equation}
    \hat \epsilon_i(t)=1-|\hat h_i(t)|,
    \label{eq:estimated-deficiency}
\end{equation}
where $\hat \epsilon_i$ is set to 1 before the generator is observed. Given a local Pauli operator $P_A$, the plug-in upper-bound proxy for $\mu_{P_A}^2$ is
\begin{equation}
    w_t(P_A)=\min\bigg\{1,\min_{i:\{P_A,H_i\}=0} \hat \epsilon_i(t)(2-\hat \epsilon_i(t))\bigg\},
    \label{eq:pauli-square-bound}
\end{equation}
where the inner minimum is defined as 1 if no $H_i\in\mathcal H_A$ anticommutes with $P_A$. Thus, the bound can be informative when $P_A$ anticommutes with at least one $H_i$ and is otherwise uninformative. For scoring purposes, we impose a floor $w_0$ so that every nonidentity Pauli operator can retain a nonzero reward:
\begin{equation}
    w_t^{\text{score}}(P_A) = \max\{w_0,w_t(P_A)\},\qquad w_t^{\text{score}}(I)=0.
    \label{eq:pauli-score-weight}
\end{equation}

\subsubsection{CorInf local-basis score}
The local basis space $\mathcal B_A=\{X,Y,Z\}^{\otimes a}$ has cardinality $3^a$. Let $m_P(t)$ be the number of previous local settings covering $P_A$, let $a_i(t)$ be the number of previous full-support local settings that anticommute with $H_i$, and let $n_{j,b}(t)$ be the number of times axis $b$ has been used on $j\in A$. The residual Pauli and anticommuting rewards are, respectively,
\begin{equation}
    r_P(t)=\frac{w_t^{\text{score}}(P)}{1+m_P(t)},\qquad s_i(t)=\frac{\hat \epsilon_i(t)}{1+a_i(t)}.
    \label{eq:residual-rewards}
\end{equation}
Writing $R_C=\sum_{P\ne I}r_P$ and $R_A=\sum_i s_i$, the adaptive mixing coefficient is
\begin{equation}
    \alpha_t=\frac{\beta R_C}{\beta R_C+R_A}.
    \label{eq:adaptive-mixing}
\end{equation}
Here $\beta \ge 0$ is the commuting-bias hyperparameter. A value $\beta>1$ favors settings covering relevant commuting products. If $R_C$ or $R_A$ vanishes, the corresponding normalized score below is set to zero; if both vanish, the implementation sets $\alpha_t=1/2$ for $\beta>0$ and $\alpha_t=0$ for $\beta=0$. For a local basis $B_A$, define
\begin{align}
    C_t(B_A)=\frac{\sum_{P_A:B_A\succeq P_A}r_{P_A}}{R_C},\\
    A_t(B_A)=\frac{\sum_{i:\{P(B_A),H_i\}=0}s_i}{R_A},\\
    L_t(B_A)=\frac{1}{a}\sum_{j\in A}\frac{1}{1+n_{j,B_j}(t)}.
    \label{eq:coverage-components}
\end{align}
where $P(B_A)$ is the full-support Pauli on $A$ defined by the axes of the setting $B_A$. The final coverage score is
\begin{equation}
    S_{\text{cov},t}(B_A)=\alpha_tC_t(B_A)+(1-\alpha_t)A_t(B_A)+\lambda_{\text{loc}}L_t(B_A).
    \label{eq:coverage-score}
\end{equation}
The first term rewards coverage of relevant local Pauli coordinates according to the commutation bounds, the second rewards settings incompatible with uncertain generators according to the anticommutation bounds, and the last prevents collapse onto a small set of Pauli operators.

CorInf must also reward acquisition of the generators themselves. We therefore define
\begin{equation}
    u_i(t)=\frac{1}{\sqrt{1+n_i(t)}}.
    \label{eq:generator-uncertainty}
\end{equation}
Given a basis $B_A$, define $\mathbf C(B_A)$ as the family of subsets of $\mathcal H_A$ whose axes on $A$ are covered by $B_A$ and whose remaining axes can be completed by a global product basis. The generator score is
\begin{align}
    S_{\text{gen},t}(B_A)&=\max_{C\in \mathbf C(B_A)}\sum_{i\in C}u_i(t),\\
    \bar S_{\text{gen},t}(B_A)&=\frac{S_{\text{gen},t}(B_A)}{\max_{B'}S_{\text{gen},t}(B')}.
    \label{eq:generator-score}
\end{align}
If the denominator in the second line vanishes, $\bar S_{\text{gen},t}$ is set to zero.
The total score is
\begin{equation}
    S_t(B_A)=\lambda_{\mathcal P}S_{\text{cov},t}(B_A)+\lambda_G\bar S_{\text{gen},t}(B_A).
    \label{eq:bi-total-score}
\end{equation}

Instead of selecting only the maximizer, CorInf uses the Gibbs distribution
\begin{equation}
    q_t^{\text{CorInf}}(B_A)=\frac{\exp(S_t(B_A)/\tau)}{\sum_{B_A'}\exp(S_t(B_A')/\tau)},
    \label{eq:bi-gibbs}
\end{equation}
where $\tau>0$ is the temperature. As $\tau\to0$, the distribution concentrates on the maximizer, while as $\tau\to+\infty$ it approaches the uniform distribution over candidates. We use $\tau=0.01$ in the simulations and sample from
\begin{equation}
    q_t(B_A)=(1-\eta)q_t^{\text{CorInf}}(B_A)+\eta 3^{-a},
    \label{eq:local-bi-mixture}
\end{equation}
with inclusion probability
\begin{equation}
    Q_t(P_A)=(1-\eta)\sum_{B_A\succeq P_A}q_t^{\text{CorInf}}(B_A)+\eta 3^{-\operatorname{wt}(P_A)}.
    \label{eq:local-bi-inclusion}
\end{equation}

\subsubsection{\label{sec:purity-estimator}Local purity estimator}
For both measurement procedures, let $Z_t(P_A)$ be as in Eq.~\eqref{eq:ipw-observable}. The unbiased order-2 U-statistic for a squared Pauli expectation~\cite{Hoeffding1948} is
\begin{equation}
    \widehat{\mu_P^2}=\binom{N}{2}^{-1}\sum_{1\le r<s \le N}Z_r(P)Z_s(P),
    \label{eq:purity-ustatistic}
\end{equation}

and the corresponding purity estimator is
\begin{equation}
    \widehat{\mathcal P}_A=2^{-a}\sum_{P_A\in \mathcal P_a}\widehat{\mu^2_{P_A}}.
    \label{eq:purity-estimator}
\end{equation}
The implementation evaluates Eq.~\eqref{eq:purity-ustatistic} incrementally: if $F_P(t)=\sum_{r<t}Z_r(P)$, then the new shot contributes $F_P(t)Z_t(P)$. In addition, the CorInf curve in Fig.~\ref{fig:exp_purity} applies a plug-in ``bound projection.'' Using the pointwise quantity in Eq.~\eqref{eq:pauli-square-bound}, we clip each estimated Pauli second moment as
\begin{equation}
    \widehat{\mu_{P_A}^2}^{\,\text{clip}}=\min\{w_t(P_A),\max\{0,\widehat{\mu_{P_A}^2}\}\}.
    \label{eq:purity-clipping}
\end{equation}
The corresponding estimator is $2^{-a}\sum_{P_A}\widehat{\mu_{P_A}^2}^{\,\text{clip}}$. This projection, inferred from the measured generators, suppresses rare outliers. The resulting estimator is biased but consistent, exchanging unbiasedness for lower variance.

\subsection{Stabilizer Rényi entropy estimation}
The direct target is the fourth moment $\mathcal A_4=2^{-n}\sum_P\mu_P^4$. However, enumerating all $4^n$ Pauli strings or all $3^n$ measurement bases is infeasible when $n=100$. We therefore restrict the scoring set to the family
\begin{equation}
    \mathcal T=\{P_{i,X},P_{i,Y},P_{i,Z}\}_{i=1}^n,
    \label{eq:magic-target-set}
\end{equation}
where
\begin{equation}
    P_{i,X}=H_i,\qquad P_{i,Y}=Y_i\prod_{j\in N(i)}Z_j, \qquad P_{i,Z}=Z_i.
    \label{eq:magic-probes}
\end{equation}

\subsubsection{CorInf local-basis score}
For an unseen generator, $\hat \epsilon_i=1$; after $n_i(t)$ observations, we use the optimistic estimate
\begin{equation}
    \tilde \epsilon_i(t)=\min \bigg\{1,1-|\hat h_i(t)|+\frac{1}{\sqrt{1+n_i(t)}}\bigg\}.
    \label{eq:magic-deficiency}
\end{equation}
For either transverse probe $P\in\{P_{i,Y},P_{i,Z}\}$, Eq.~\eqref{eq:anticommuting-bound} implies
\begin{equation}
    \mu_P^4 \le (1-\mu_{H_i}^2)^2.
    \label{eq:magic-transverse-bound}
\end{equation}
Accordingly, for each $i$ and $a\in\{X,Y,Z\}$, we introduce the weight
\begin{equation}
    w_{i,a}(t)=(\tilde \epsilon_i(t)(2-\tilde\epsilon_i(t)))^2,\qquad a\in\{X,Y,Z\}.
    \label{eq:magic-weight}
\end{equation}
For $a\in\{Y,Z\}$ this is the plug-in upper bound in Eq.~\eqref{eq:magic-transverse-bound}. The longitudinal probe $P_{i,X}=H_i$ shares the same mode weight as a design choice. The residual reward is
\begin{equation}
    r_{i,a}(t)=\frac{w_{i,a}(t)}{1+m_{i,a}(t)},
    \label{eq:magic-reward}
\end{equation}
where $m_{i,a}(t)$ counts earlier settings that covered $P_{i,a}$. At each shot, a pool of full product-basis candidates $\{B_j\}$ is constructed by greedily packing compatible probes from $\mathcal T$; the simulations use 100 candidates. The candidate score is
\begin{equation}
    S_t(B_j)=\frac{\sum_{P\in \mathcal T:B_j\succeq P}r_P(t)}{\sum_{P\in \mathcal T }r_P(t)}
    \label{eq:magic-candidate-score}
\end{equation}
and its sampling probability is
\begin{equation}
    \pi_t(B_j)=\frac{\exp(S_t(B_j)/\tau)}{\sum_\ell \exp(S_t(B_\ell)/\tau)}.
    \label{eq:magic-candidate-policy}
\end{equation}
As for purity, the actual measurement setting is drawn uniformly with probability $\eta$ and from the candidate distribution $\pi_t$ otherwise. For every $P$, before the outcome is observed, we compute
\begin{equation}
    Q_t(P)=(1-\eta)\sum_{j:B_j\succeq P}\pi_t(B_j)+\eta 3^{-\operatorname{wt}(P)}.
    \label{eq:magic-inclusion}
\end{equation}

\subsubsection{Stabilizer Rényi entropy estimator}
The direct estimator groups four settings $(B_1,\dots,B_4)$ and defines $\mathcal M$ as the set of sites on which all four bases use the same axis, with $m=|\mathcal M|$. A Pauli operator $P$ is then drawn uniformly from the $2^m$ strings obtained by choosing, independently on each site in $\mathcal M$, either the identity or the common measured axis. If $x_\ell(P)$ is the parity at shot $\ell$, the order-4 kernel is
\begin{equation}
    K_4=2^{m-n}\frac{\prod_{\ell=1}^4 x_\ell(P)}{\prod_{\ell=1}^4 Q_\ell(P)}.
    \label{eq:magic-kernel}
\end{equation}
Averaging disjoint kernels gives $\widehat{\mathcal A}_4$, after which we clip the result to its physical interval, also avoiding $\widehat{\mathcal A}_4=0$:
\begin{equation}
    \widehat{\mathcal M}_2 = -\log_2 \bigl(\operatorname{clip}(\widehat{\mathcal A}_4,2^{-n},1)\bigr).
    \label{eq:magic-estimator}
\end{equation}

\section{\label{sec:AGHDOapp}AGHDO and property sampling}
For a local subsystem of dimension $d=2^{|A|}$, the AGHDO model~\cite{VicentiniRossiCarleo2022} is
\begin{equation}
    \rho_{\boldsymbol\phi} = \frac{F_{\boldsymbol\phi}F_{\boldsymbol\phi}^\dagger}{\operatorname{Tr}(F_{\boldsymbol\phi}F_{\boldsymbol\phi}^\dagger)},\qquad F_{\boldsymbol\phi}\in \mathbb C^{d\times r}.
    \label{eq:aghdo-local-model}
\end{equation}
Hermiticity, positivity, and unit trace hold identically in this parametrization. The negative log-likelihood of the observed Pauli-product outcomes is optimized with Adam~\cite{KingmaBa2015}.

This physical parametrization is naturally aligned with the bounds discussed in this work: because the model lies in the positive-semidefinite state space, incompatible combinations of Pauli expectations are automatically excluded. Positivity therefore induces the same family of algebraic restrictions that CorInf uses during acquisition.

\subsection{Error decomposition}
Restricting the search to a parametrized physical family incurs a model error. Let $\mathcal F_{\mathrm{AGHDO}}=\{\rho_{\boldsymbol\phi}\}$ be the family modeled by AGHDO, and let $\rho_{\boldsymbol\phi^\star}$ be the optimum within the measurement model. For a target function $f$,
\begin{align}
    |\hat f - f(\rho)|\le{}& |\hat f-f(\hat \rho)|+|f(\hat \rho)-f(\rho_{\boldsymbol\phi^\star})|\\
    &+|f(\rho_{\boldsymbol\phi^\star})-f(\rho)|.
    \label{eq:aghdo-error-decomposition}
\end{align}
The first term is the property-evaluation error, the second contains the estimation and optimization error of the fitted model, and the last is the model error, which persists whenever the true state cannot be represented at the chosen rank. Positivity can reduce variance by eliminating unphysical directions, but it does not remove approximation bias. In the comparison between CS-AGHDO and CorInf-AGHDO, $\mathcal F_{\mathrm{AGHDO}}$, the optimizer, and the hyperparameters are identical. The model error is therefore common to both methods, and the experimental variable is the measurement distribution.
\subsection{Sampling observables and purity}
For local purity, AGHDO evaluates moments of the fitted $\rho_{\boldsymbol\phi}$. If the $4^{|A|}$ moments are not summed explicitly, $S$ nonidentity Pauli strings are sampled uniformly. Defining $\overline{\mu_P^2}=S^{-1}\sum_{s=1}^S[\operatorname{Tr}(\rho_{\boldsymbol\phi}P_s)]^2$, we use
\begin{equation}
    \widehat{\mathcal P}_A^{\text{AGHDO}} = 2^{-|A|}\left[1+(4^{|A|}-1)\overline{\mu_P^2}\right].
    \label{eq:aghdo-purity}
\end{equation}
\subsection{Sampling stabilizer Rényi entropy}
For the magic benchmark, the model factorizes over the $k$ active logical single-qubit modes of the rotated-cluster family. The same family-specific map from physical probes to these logical modes is supplied to CS-AGHDO and CorInf-AGHDO; the comparison therefore isolates acquisition within this fixed model rather than blind discovery of the rotated subspace. For a Bloch vector $\mathbf r_i=(x_i,y_i,z_i)$,
\begin{equation}
    a_{4,i}=\frac{1+x_i^4+y_i^4+z_i^4}{2},
    \label{eq:aghdo-local-fourth-moment}
\end{equation}
and an exact evaluation would give
\begin{equation}
    \mathcal M_2^{\text{AGHDO}} = -\sum_{i=1}^{k} \log_2a_{4,i}.
    \label{eq:aghdo-magic}
\end{equation}
The implementation estimates each $a_{4,i}$ by uniform sampling of the four single-qubit Pauli labels $I,X,Y,Z$ and uses the same number of property samples for CS-AGHDO and CorInf-AGHDO.

\end{document}